\newcommand{\bbsm}{ \begin{smallmatrix}}
	\newcommand{\besm}{\end{smallmatrix}}
\newtheorem{theorem}{Theorem}
\newtheorem{example}{Example}
\newtheorem{definition}{Definition}
\newtheorem{lemma}{Lemma}
\newtheorem{proposition}{Proposition}
\begin{document}

\title{Nonexistence of perfect permutation codes under the Kendall $\tau$-metric}
\author{\small Xiang Wang $^{1}$ \ Wang Yuanjie $^{1}$ \ Yin Wenjuan$^{2}$ \footnote{Corresponding Author}\ Fang-Wei Fu$^{2}$ \\
\scriptsize $^1$ National Computer Network Emergency Response Technical Team/Coordination Center of China (CNCERT/CC) \\
\scriptsize $^2$ Chern Institute of Mathematics and LPMC, Nankai University, Tianjin 300071, China\\
\scriptsize E-mail: xqwang@mail.nankai.edu.cn, wang.yuanjie@outlook.com, ywjaimama@163.com, fwfu@nankai.edu.cn\\
}
\date{}
\maketitle
\thispagestyle{empty}
\begin{abstract}
In the rank modulation scheme for flash memories, permutation codes have been studied. In this paper, we study perfect permutation codes in $S_n$, the set of all permutations on $n$ elements, under the Kendall $\tau$-Metric. We answer one open problem proposed by Buzaglo and Etzion. That is, proving the nonexistence of perfect codes in $S_n$, under the Kendall $\tau$-metric, for more values of $n$. Specifically, we present the recursive formulas for the size of a ball with radius $r$ in $S_n$ under the Kendall $\tau$-metric. Further, We prove that there are no perfect $t$-error-correcting codes in $S_n$ under the Kendall $\tau$-metric for some $n$ and $t=2,3,4,~\text{or}~5$.

\end{abstract}

\small\textbf{Keywords:} Flash memory, Perfect codes, Kendall $\tau$-Metric, Permutation codes.

\maketitle

\section{Introduction}
Flash memory is a non-volatile storage medium that is both electrically programmable and erasable. The rank modulation scheme for flash memories has been proposed in \cite{Jiang1}. In this scheme, one permutation corresponds to a relative ranking of all the flash memory cells' levels. A permutation code is a nonempty subset of $S_n$, where $S_n$ is the set of all the permutations over $\{1,2,...,n\}$.
Permutation codes have been studied under various metrics, such as the $\ell_{\infty}$-metric \cite{Klve,Wang2,Yehezkeally}, the Ulam metric \cite{Farnoud}, and the Kendall $\tau$-metric \cite{Jiang2,Wang,Barg,Zhang}.

In this paper, we will focus on  permutation codes under the Kendall $\tau$-metric. The \emph{Kendall $\tau$-distance} \cite{Yehezkeally} between two permutations $\pi, \sigma\in S_n$ is the minimum number of adjacent transpositions required to obtain the permutation $\sigma$ from $\pi$, where an adjacent transposition is an exchange of two distinct adjacent elements. Permutation codes under the Kendall $\tau$-distance with minimum distance $d$ can correct up to $\lfloor\frac{d-1}{2} \rfloor$ errors. Let $A(n,d)$ be the size of the largest code in $S_n$ with minimum Kendall $\tau$-distance $d$. The bounds on $A(n,d)$ were proposed in \cite{Jiang2,Buzaglo,Wang3,Vijayakumaran}. Some $t$-error-correcting codes in $S_n$ were constructed in \cite{Horvitz,Jiang2,Barg,Zhou1,Zhou2}. Buzaglo and Etzion \cite{Buzaglo} proved that there are no perfect single-error-correcting codes in $S_n$, where $n>4$ is a prime or $4\leq n\leq 10$. They further \cite{Buzaglo} proposed the open
problem to prove the nonexistence of perfect codes in $S_n$, under the Kendall $\tau$-metric, for more values of $n$ and/or other distances. In this paper, we prove that there are no perfect $t$-error-correcting codes in $S_n$ under the Kendall $\tau$-metric for some $n$ and $t=2,3,4,~\text{or}~5$. Specially, we prove that there are no perfect two-error-correcting codes in $S_n$, where $n+2>6$ is a prime. We also prove that there are no perfect three-error-correcting codes in $S_n$, where $n+1>6$ is a prime, $n^2+2n-6$ has a prime factor $p>n$, or $4\leq n\leq 33$. We further prove that there are no perfect four-error-correcting codes in $S_n$, where $n+1>6$ or $n+2>7$ is a prime, $n^2+3n-12$ has a prime factor $p>n$, or $5\leq n\leq 19$. Finally, we prove that there are no perfect five-error-correcting codes in $S_n$, where $n+7\geq 12$ is a prime or $n^3+3n^2-6n-28$ has a prime factor $p>n$.

The rest of this paper is organized as follows. In Section \ref{sec2}, we will give some basic definitions for the Kendall $\tau$-metric and for perfect permutation codes. In Section \ref{sec3}, we determine the size of some balls with radius $r$ in $S_n$ under the Kendall $\tau$-metric. In Section \ref{sec4}, we prove the nonexistence of a perfect $t$-error-correcting code in $S_n$ for some $n$ and $t=2,3,4,~\text{or}~5$ by using the sphere packing upper bound. Section \ref{sec5} concludes this paper.

\section{Preliminaries}
\label{sec2}
In this section we give some definitions and notations for the Kendall $\tau$-metric and perfect permutation codes. In addition, we summarize some important known facts.

Let $[n]$ denote the set $\{1,2,...,n\}$. Let $S_n$ be the set of all the permutations over $[n]$. We denote by $\pi\triangleq[\pi(1),\pi(2),...,\pi(n)]$ a \emph{permutation} over $[n]$.  For two permutations $\sigma,\pi \in S_n$, their multiplication $\pi\circ\sigma$ is denoted by the composition of $\sigma$ on $\pi$, i.e., $\pi\circ\sigma(i)=\sigma(\pi(i))$, for all $i\in [n]$. Under this operation, $S_n$ is a noncommutative \emph{group} of size $|S_n|=n!$. Denote by $\epsilon_n\triangleq[1,2,...,n]$ the identity permutation of $S_n$. Let $\pi^{-1}$ be the \emph{inverse} element of $\pi$, for any $\pi\in S_n$.  For an unordered pair of distinct numbers $i,j\in[n]$, this pair forms an inversion in a permutation $\pi$ if $i<j$ and simultaneously $\pi(i)>\pi(j)$.

Given a permutation $\pi=[\pi(1),\pi(2),...,\pi(i),\pi(i+1),...\pi(n)]\in S_n$, an adjacent transposition is an exchange of two adjacent elements $\pi(i),\pi(i+1)$, resulting in the permutation $[\pi(1),\pi(2),...,\pi(i+1),\pi(i),...\pi(n)]$ for some $1\leq i\leq n-1$. For any two permutations $\sigma,\pi\in S_n$, the Kendall $\tau$-distance between two permutations $\pi, \sigma$, denoted by $d_K(\pi,\sigma)$, is the minimum number of adjacent transpositions required to obtain the permutation $\sigma$ from $\pi$. The expression for $d_K(\pi,\sigma)$ \cite{Jiang2} is as follows:
\begin{equation}
d_{K}(\sigma,\pi)=|\{(i,j):\sigma^{-1}(i)<\sigma^{-1}(j)\wedge\pi^{-1}(i)>\pi^{-1}(j)\}|\nonumber.
\end{equation}

For $\pi\in S_n$, the Kendall $\tau$-weight of $\pi$, denoted by $w_K(\pi)$, is defined as the Kendall $\tau$-distance between $\pi$ and  the identity permutation $\epsilon_n$. Clearly, $w_K(\pi)$ is the number of inversions in the permutation $\pi$.
\begin{definition}
For $1\leq d \leq \binom{n}{2}$, $C\subset S_n$ is an $(n,d)$-\emph{permutation code under the Kendall $\tau$-metric}, if $d_{K}(\sigma,\pi)\geq d$ for any two distinct permutations $\pi, \sigma\in C$.
\end{definition}

For a permutation $\pi\in S_n$, the Kendall $\tau$-ball of radius $r$ centered at $\pi$, denoted as $B_{K}^{n}(\pi,r)$, is defined by $B_{K}^n(\pi,r)\triangleq\{\sigma\in S_n|d_K(\sigma,\pi)\leq r\}$. For a permutation $\pi\in S_n$, the Kendall $\tau$-sphere of radius $r$ centered at $\pi$, denoted as $S_{K}^{n}(\pi,r)$, is defined by $S_{K}^n(\pi,r)\triangleq\{\sigma\in S_n|d_K(\sigma,\pi)= r\}$. The size of a Kendall $\tau$-ball or a $\tau$-sphere of radius $r$ does not depend on the center of the ball under the Kendall $\tau$-metric. Thus, we denote the size of $B_{K}^{n}(\pi,r)$ and $S_{K}^{n}(\pi,r)$ as $B_K^n(r)$ and $S_K^n(r)$, respectively. We denote the largest size of an $(n, d)$-permutation code under the Kendall $\tau$-metric as $A_K(n,d)$. The sphere-packing bound for permutation codes under the Kendall $\tau$-metric are as follows:

\begin{proposition}\cite[Theorems 17 and 18]{Jiang2}
\begin{equation}
A_{K}(n,d)\leq\frac{n!}{B_K^n(\lfloor \frac{d-1}{2}\rfloor)}\nonumber.
\end{equation}
\end{proposition}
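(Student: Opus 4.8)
The plan is to prove this sphere-packing bound by the classical volume (Hamming-type) argument, the only point requiring care being that $d_K$ is a bona fide metric so that the triangle inequality is available.

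First I would note that $d_K$ is a metric on $S_n$: from the explicit formula for $d_K(\sigma,\pi)$ recalled above, nonnegativity and symmetry are immediate, and $d_K(\sigma,\pi)=0$ exactly when no pair $(i,j)$ is counted, i.e. when $\sigma^{-1}$ and $\pi^{-1}$ induce the same order on every pair, which forces $\sigma=\pi$. For the triangle inequality $d_K(\sigma,\rho)\le d_K(\sigma,\pi)+d_K(\pi,\rho)$ I would use the defining description of $d_K$ as the minimum number of adjacent transpositions carrying one permutation to the other: concatenating a shortest sequence of adjacent transpositions from $\sigma$ to $\pi$ with a shortest sequence from $\pi$ to $\rho$ yields a sequence of adjacent transpositions from $\sigma$ to $\rho$ of length $d_K(\sigma,\pi)+d_K(\pi,\rho)$, which is therefore at least the minimal such length $d_K(\sigma,\rho)$. (Equivalently, $d_K$ is the graph distance in the Cayley graph of $S_n$ with the $n-1$ adjacent transpositions as generators, and graph distances always satisfy the triangle inequality.)

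Next, let $C\subseteq S_n$ be an $(n,d)$-permutation code of maximum size $A_K(n,d)$, and set $t=\lfloor\frac{d-1}{2}\rfloor$, so that $2t+1\le d$. I would then show that the balls $\{B_K^n(\pi,t):\pi\in C\}$ are pairwise disjoint. Indeed, if some $\rho$ lay in $B_K^n(\pi,t)\cap B_K^n(\sigma,t)$ for distinct $\pi,\sigma\in C$, the triangle inequality would give $d_K(\pi,\sigma)\le d_K(\pi,\rho)+d_K(\rho,\sigma)\le 2t\le d-1<d$, contradicting the minimum-distance property of $C$. Since each $B_K^n(\pi,t)$ is a subset of $S_n$ and, by the center-independence of ball sizes noted in this section, has cardinality exactly $B_K^n(t)$, disjointness forces $|C|\cdot B_K^n(t)\le |S_n|=n!$. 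Dividing through gives $A_K(n,d)=|C|\le \frac{n!}{B_K^n(\lfloor\frac{d-1}{2}\rfloor)}$, as claimed.

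The only genuine obstacle here is the verification of the triangle inequality for $d_K$ (and, implicitly, the fact that $|B_K^n(\pi,r)|$ does not depend on $\pi$, which follows from left-invariance of $d_K$ under the group operation); once these structural facts are in hand, the packing argument is routine. Since subsequent sections will apply this bound with $d=2t+1$ and analyze when equality (a perfect code) is possible, I would also remark that equality in the displayed inequality holds precisely when the balls of radius $t=\lfloor\frac{d-1}{2}\rfloor$ about the codewords partition $S_n$, i.e. exactly when $C$ is a perfect $t$-error-correcting code — this is the hook for the nonexistence results to come.
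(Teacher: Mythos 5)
Your argument is correct: it is the classical sphere-packing argument (disjointness of radius-$\lfloor\frac{d-1}{2}\rfloor$ balls via the triangle inequality for $d_K$, plus center-independence of ball sizes), which is exactly the standard proof of this bound. The paper itself gives no proof here — it simply cites Theorems 17 and 18 of Jiang, Schwartz, and Bruck — so your write-up supplies precisely the routine verification that the citation stands in for, and your closing remark on the equality case matches the paper's definition of a perfect code.
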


When $d=2r+1$, an $(n,2r+1)$-permutation code $C$ under the Kendall $\tau$-metric is called a perfect permutation code under the Kendall $\tau$-metric if it attains the sphere-packing bound, i.e., $|C|\cdot B_K^n(r)=n!$. That is,  the balls with radius $r$ centered at the codewords of $C$ form a partition of $S_n$. A perfect $(n,2r+1)$-permutation code under the Kendall $\tau$-metric is also called a perfect $r$-error-correcting code under the Kendall $\tau$-metric.

In \cite{Buzaglo}, Buzaglo and Etzion proved that there does not exist a perfect one-error-correcting code under the Kendall $\tau$-metric if $n>4$ is a prime or $4\leq n\leq 10$. Based on the above definitions and notations, we will prove the nonexistence of a perfect $t$-error-correcting code in $S_n$ under the Kendall $\tau$-metric for some $n$ and $t=2,3,4,~\text{or}~5$ by using the sphere-packing upper bound in the following sections.

\section{The size of a ball or a sphere with radius $r$ in $S_n$ under the Kendall $\tau$-metric }
\label{sec3}

In this section, we compute the size of a ball or a sphere with radius $r$ in $S_n$ under the Kendall $\tau$-metric and give recursive formulas of $B_K^n(r)$ and $S_K^n(r)$, respectively. Since $B_K^n(r)$ does not depend on the center of the ball, we consider the ball $B_{K}^n(\epsilon_n,r)$ which is a ball with radius $r$ centered at the identity permutation $\epsilon_n$ and denote by $S_{K}^n(\epsilon_n,r)\triangleq\{\sigma\in S_n|d_K(\sigma,\epsilon_n)=w_{k}(\sigma)=r\}$ the sphere centered at $\epsilon_n$ and of radius $r$.

\subsection{The size of a sphere of radius $r$ in $S_n$ under the Kendall $\tau$-metric}
In order to give the property of $S_K^n(r)$, we require some notations and lemmas in \cite{Buzaglo}. For a permutation $\pi=[\pi(1),\pi(2),...,\pi(n)]\in S_n$, the \emph{reverse} of $\pi$ is the permutation $\pi^r\triangleq[\pi(n),\pi(n-1),...,\pi(2),\pi(1)]$. For all $\pi\in S_n$, we have $w_K(\pi)\leq \binom{n}{2}$. For convenience, we denote $S_{K}^n(r)=0$ for $r\geq \binom{n}{2}+1$.

\begin{lemma}\cite[Lemma 1]{Buzaglo} For every $\pi,\epsilon_n\in S_n$,
\begin{equation}
d_{K}(\epsilon_n,\pi)+d_{K}(\epsilon_n,\pi^r)=w_K(\pi)+w_K(\pi^r)=d_{K}(\pi,\pi^r)=\binom{n}{2}\label{eq1}.
\end{equation}
\label{Lm1}
\end{lemma}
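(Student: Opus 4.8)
The plan is to establish the three equalities in \eqref{eq1} in turn, using the characterization of Kendall $\tau$-weight as a count of inversions. First I would observe that the map $\pi \mapsto \pi^r$ is an involution on $S_n$, and that it behaves in a complementary way with respect to inversions: for an unordered pair $\{a,b\}$ of distinct values, the pair $\{a,b\}$ contributes an inversion to exactly one of $\pi$ and $\pi^r$. Concretely, if in $\pi$ the larger value appears to the left of the smaller one, then after reversing the word the larger value appears to the right of the smaller one, so the pair is \emph{not} an inversion of $\pi^r$, and vice versa. Since the total number of unordered pairs of positions (equivalently, of values) is $\binom{n}{2}$, summing over all pairs gives $w_K(\pi) + w_K(\pi^r) = \binom{n}{2}$. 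This handles the middle equality once I identify $d_K(\epsilon_n,\pi)$ with $w_K(\pi)$ and $d_K(\epsilon_n,\pi^r)$ with $w_K(\pi^r)$, which is immediate from the definition of Kendall $\tau$-weight.

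Next I would handle $d_K(\pi,\pi^r) = \binom{n}{2}$. The distance $d_K(\pi,\pi^r)$ counts the number of unordered pairs of values $\{i,j\}$ whose relative order in $\pi$ differs from their relative order in $\pi^r$. But reversing a word reverses the relative order of \emph{every} pair of entries, so every one of the $\binom{n}{2}$ pairs is discordant; hence $d_K(\pi,\pi^r) = \binom{n}{2}$. Alternatively, one can use the left-invariance of the Kendall $\tau$-metric under the group action (so $d_K(\pi,\pi^r)$ equals $w_K$ of the permutation carrying $\pi$ to $\pi^r$) together with the fact that the longest element of $S_n$ has exactly $\binom{n}{2}$ inversions; I would pick whichever of these two routes the paper's earlier notation makes cleanest, probably the direct pair-counting argument since it parallels the first step.

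Finally, for the leftmost equality $d_K(\epsilon_n,\pi) + d_K(\epsilon_n,\pi^r) = d_K(\pi,\pi^r)$, I would note this is just a restatement combining the two facts already proved: the left side equals $w_K(\pi) + w_K(\pi^r) = \binom{n}{2}$, and the right side equals $\binom{n}{2}$. So really all four quantities in the chain are shown equal to $\binom{n}{2}$, and the lemma follows. I do not anticipate a genuine obstacle here; the only point requiring care is the bookkeeping in the first step — making sure the claim "each pair is an inversion of exactly one of $\pi,\pi^r$" is stated in terms of pairs of \emph{values} (via $\pi^{-1}$) consistently with the formula for $d_K$ given in the preliminaries, so that the translation between the inversion-counting picture and the formula $d_K(\sigma,\pi) = |\{(i,j): \sigma^{-1}(i) < \sigma^{-1}(j) \wedge \pi^{-1}(i) > \pi^{-1}(j)\}|$ is airtight.
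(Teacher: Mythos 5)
Your argument is correct, and it is exactly the standard pair-counting proof: each of the $\binom{n}{2}$ unordered pairs of values is discordant between $\pi$ and $\pi^r$ (hence an inversion of exactly one of them), which gives all the stated equalities at once. Note that the paper itself offers no proof of this lemma --- it is quoted directly from Buzaglo and Etzion \cite[Lemma 1]{Buzaglo} --- and your reasoning coincides with the argument given there, so there is nothing further to reconcile.
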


By Lemma \ref{Lm1}, we can obtain the following lemma.

\begin{lemma}For any $0\leq i\leq \big\lfloor\frac{\binom{n}{2}}{2}\big\rfloor$,
\begin{equation}
S_{K}^n(i)=S_{K}^n\big(\binom{n}{2}-i\big).\label{eq2}
\end{equation}
\label{Lm2}
\end{lemma}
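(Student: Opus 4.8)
The plan is to derive \eqref{eq2} directly from Lemma \ref{Lm1} by exhibiting a bijection between the spheres $S_K^n(i)$ and $S_K^n\big(\binom{n}{2}-i\big)$. The natural candidate is the reversal map $\pi\mapsto\pi^r$, which is clearly an involution on $S_n$ and hence a bijection. By Lemma \ref{Lm1}, for every $\pi\in S_n$ we have $w_K(\pi)+w_K(\pi^r)=\binom{n}{2}$, so $w_K(\pi)=i$ if and only if $w_K(\pi^r)=\binom{n}{2}-i$. Therefore the reversal map restricts to a bijection from $\{\pi:w_K(\pi)=i\}=S_K^n(\epsilon_n,i)$ onto $\{\sigma:w_K(\sigma)=\binom{n}{2}-i\}=S_K^n(\epsilon_n,\binom{n}{2}-i)$, and comparing cardinalities gives $S_K^n(i)=S_K^n\big(\binom{n}{2}-i\big)$.

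First I would note that since $w_K(\pi)$ is defined as $d_K(\epsilon_n,\pi)$, the set $S_K^n(\epsilon_n,i)$ is exactly the set of permutations of Kendall $\tau$-weight $i$, and the quantity $S_K^n(i)$ is its size (independent of center). Then I would invoke Lemma \ref{Lm1} in the form $w_K(\pi)+w_K(\pi^r)=\binom{n}{2}$. Next I would verify that $(\pi^r)^r=\pi$, so reversal is its own inverse and in particular injective and surjective on $S_n$. Finally I would check that it maps weight-$i$ permutations precisely to weight-$(\binom{n}{2}-i)$ permutations, using the weight identity in both directions, and conclude equality of sizes. The restriction $0\le i\le\lfloor\binom{n}{2}/2\rfloor$ in the statement is only a normalization so that $\binom{n}{2}-i\ge i$; the argument in fact works for all $i$.

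There is essentially no obstacle here: the only thing to be careful about is the bookkeeping at the boundary, namely confirming the convention $S_K^n(r)=0$ for $r>\binom{n}{2}$ is consistent (it plays no role for $i$ in the stated range since then $\binom{n}{2}-i\ge 0$), and making sure one is comparing sizes of spheres centered at the same (or equivalent) points — which is guaranteed by the center-independence of $S_K^n(r)$ already recorded in Section \ref{sec2}. So the proof is a short one-paragraph bijection argument, and I would expect the author's proof to match this outline almost verbatim.
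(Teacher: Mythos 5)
Your proof is correct and matches the paper's argument essentially verbatim: the paper also defines the reversal map $f(\pi)=\pi^r$ from $S_K^n(\epsilon_n,i)$ to $S_K^n\big(\epsilon_n,\binom{n}{2}-i\big)$, applies Lemma \ref{Lm1} to shift the weight, and concludes by bijectivity. No differences worth noting.
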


\begin{proof}
Let $m=\binom{n}{2}$. We just need to prove that $|S_{K}^n(\epsilon_n,i)|=|S_{K}^n(\epsilon_n,m-i)|$. First we define a function $f: S_{K}^n(\epsilon_n,i)\rightarrow S_{K}^n(\epsilon_n,m-i)$, where $f(\pi)=\pi^r$ for any $\pi\in S_{K}^n(\epsilon_n,i)$.

If $\pi\in S_{K}^n(\epsilon_n,i)$, then $w_K(\pi)=i$. By $(\ref{eq1})$, $w_K(\pi^r)=\binom{n}{2}-i=m-i$. Hence, $f(\pi) \in S_{K}^n(\epsilon_n,m-i)$. Moreover, we can easily prove that the function $f$ is reasonable and bijection. Thus,  $S_{K}^n(i)=S_{K}^n\big(\binom{n}{2}-i\big).$
\end{proof}

When $i=0~\text{or}~1$, $S_{K}^n(0)=1$ and $S_{K}^n(1)=n-1$. We will further give a recursive formula of  $S_K^n(r)$ in the following lemma.

\begin{lemma}
\label{Lm3}
For all $4\leq n$ and $2\leq i\leq n-1$,
\begin{equation}
S_{K}^n(i)=\sum_{j=0}^{i}S_{K}^{n-1}(j).\label{eq3}
\end{equation}
Moreover, for all $5\leq n$ and $n\leq i \leq \big\lfloor\frac{\binom{n}{2}}{2}\big\rfloor$,
\begin{equation}
S_{K}^n(i)=\sum_{j=i-(n-1)}^{i}S_{K}^{n-1}(j).\label{eq4}
\end{equation}
\end{lemma}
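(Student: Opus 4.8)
The plan is to count permutations in $S_n$ of a given Kendall $\tau$-weight by conditioning on the position of a distinguished letter, which is the standard way the inversion-number generating function factorizes. Recall that if $\pi\in S_n$ and we look at where the value $n$ sits, say $\pi(k)=n$, then $n$ contributes exactly $n-k$ inversions (it is larger than everything, so it is inverted with every letter to its right), and deleting $n$ from $\pi$ yields a permutation $\pi'\in S_{n-1}$ (after relabeling the remaining values $1,\dots,n-1$ order-isomorphically) whose inversion set is exactly the inversions of $\pi$ not involving $n$. Hence $w_K(\pi)=w_K(\pi')+(n-k)$, and as $k$ ranges over $1,\dots,n$ the quantity $n-k$ ranges over $0,1,\dots,n-1$. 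Therefore $S_K^n(i)=\sum_{j=0}^{\min\{i,\,n-1\}} S_K^{n-1}(i-j)$, where the upper truncation reflects that $n$ cannot create more than $n-1$ inversions. This single identity already contains both displayed formulas: when $2\le i\le n-1$ we have $\min\{i,n-1\}=i$, so the sum runs over $j=0,\dots,i$, i.e. $S_K^n(i)=\sum_{j=0}^{i}S_K^{n-1}(j)$ after reindexing $j\mapsto i-j$; and when $i\ge n$ we have $\min\{i,n-1\}=n-1$, so the sum runs over $i-j$ with $j=0,\dots,n-1$, i.e. $S_K^n(i)=\sum_{\ell=i-(n-1)}^{i}S_K^{n-1}(\ell)$.

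Concretely I would carry out the following steps. First, fix $\pi\in S_n$ and define $\pi'\in S_{n-1}$ to be the pattern of $(\pi(1),\dots,\pi(n))$ with the entry equal to $n$ removed; establish the bijection between $S_n$ and the set of pairs $(\pi',k)$ with $\pi'\in S_{n-1}$ and $k\in\{1,\dots,n\}$ the insertion position of $n$. Second, verify the weight decomposition $w_K(\pi)=w_K(\pi')+(n-k)$ by splitting inversions of $\pi$ into those involving the letter $n$ (there are exactly $n-k$ of them, namely the pairs $(k,k'),k'>k$) and those not involving $n$ (these are in weight-preserving bijection with the inversions of $\pi'$, since removing the largest letter does not disturb the relative order of the others). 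Third, sum over the fibre: $S_K^n(i)=|\{\pi:w_K(\pi)=i\}|=\sum_{k=1}^{n}|\{\pi':w_K(\pi')=i-(n-k)\}|=\sum_{m=0}^{n-1}S_K^{n-1}(i-m)$, using the convention $S_K^{n-1}(j)=0$ for $j<0$ and for $j>\binom{n-1}{2}$. Fourth, specialize the range of $i$: for $2\le i\le n-1$, all terms with $m>i$ vanish because the argument $i-m$ is negative, leaving $\sum_{m=0}^{i}S_K^{n-1}(i-m)=\sum_{j=0}^{i}S_K^{n-1}(j)$, which is \eqref{eq3}; for $n\le i\le\lfloor\binom{n}{2}/2\rfloor$, no term drops out at the low end (the smallest argument is $i-(n-1)\ge 1\ge 0$) and one should check the high end is harmless, giving $\sum_{m=0}^{n-1}S_K^{n-1}(i-m)=\sum_{j=i-(n-1)}^{i}S_K^{n-1}(j)$, which is \eqref{eq4}. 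The hypotheses $4\le n$ and $5\le n$ are exactly what is needed to make the stated index ranges nonempty and consistent (e.g. $n\le\lfloor\binom{n}{2}/2\rfloor$ forces $n\ge5$).

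The only real subtlety — and the step I would be most careful about — is bookkeeping at the boundary of the weight range: one must confirm that the truncation is governed solely by the lower bound $i-m\ge 0$ in the regime $i\le n-1$, and that in the regime $i\ge n$ none of the arguments $i-m$ for $m\in\{0,\dots,n-1\}$ exceeds $\binom{n-1}{2}$, so that no term is silently lost at the top. For the latter, since $i\le\lfloor\binom{n}{2}/2\rfloor$ and $\binom{n}{2}=\binom{n-1}{2}+(n-1)$, we get $i\le\tfrac12\binom{n-1}{2}+\tfrac12(n-1)\le\binom{n-1}{2}$ once $n-1\le\binom{n-1}{2}$, i.e. $n\ge4$; hence $i-m\le i\le\binom{n-1}{2}$ and the top of the sum is fine. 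Everything else is a direct translation of the insertion bijection into a convolution identity, and no heavy computation is required.
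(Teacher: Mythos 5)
Your proposal is correct and follows essentially the same route as the paper: condition on the position of the largest letter $n$ (which contributes exactly $n-j$ inversions when it sits in position $j$), delete it to get a weight-decreasing bijection onto $S_{n-1}$, and sum over the fibre, with the two stated formulas arising from the truncation of the convolution at the two ends of the weight range. Your boundary check that $i\le\binom{n-1}{2}$ in the regime $i\ge n$ matches the paper's observation that $\lfloor\binom{n}{2}/2\rfloor\le\binom{n-1}{2}$, so nothing is missing.
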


\begin{proof}
When $4\leq n$ and $2\leq i\leq n-1$, we define $S_{K}^n(\epsilon_n,i,j)\triangleq \{\pi\in S_{K}^n(\epsilon_n,i)|\pi(j)=n\}$ for $n-i\leq j\leq n$, i.e., $\pi\in S_{K}^n(\epsilon_n,i)$ is an element of $S_{K}^n(\epsilon_n,i,j)$ if $n$ appears at the $j$th position of $\pi$. For $\pi \in S_{K}^n(\epsilon_n,i)$, the number of inversions in the permutation $\pi$ is $i$. If $\pi(j)=n$, $(\pi(k),n)$ is an inversion for all $j+1\leq k\leq n$. Hence, for any $\pi \in S_{K}^n(\epsilon_n,i)$, $n$ can only appear at the $j$th position of $\pi$ for every $n-i\leq j\leq n$. So, we obtain that  $S_{K}^n(\epsilon_n,i)=\cup_{j=n-i}^{n}S_{K}^n(\epsilon_n,i,j)$.

For all $n-i\leq j\leq n$, we define $f_j: S_{K}^n(\epsilon_n,i,j) \rightarrow S_{K}^{n-1}(\epsilon_{n-1},i-(n-j))$, where $f_j(\pi)=[\pi(1),\pi(2),...,\pi(j-1),\pi(j+1),...,\pi(n)]$ for any $\pi\in S_{K}^n(\epsilon_n,i,j)$. That is, we delete the element $n$ of $\pi$ to obtain $f_j(\pi)$. Obviously, $f_j$ is injective. For $\pi_1\in S_{K}^{n-1}(\epsilon_{n-1},i-(n-j))$, we define $\pi$ such that $\pi(k)=\pi_1(k)$ for $1\leq k\leq j-1$, $\pi(j)=n$, and $\pi(k)=\pi_1(k-1)$ for $j+1\leq k\leq n$. Then, $\pi\in S_n$ and $w_{K}(\pi)=w_{K}(\pi_1)+(n-j)=i$. Thus, $\pi\in S_{K}^n(\epsilon_n,i,j)$ and $f_j(\pi)=\pi_1$. So, we obtain that $f_j$ is bijection for all $n-i\leq j\leq n$.

Since all the set $S_{K}^n(\epsilon_n,i,j)$ are pairwise disjoint and all the $f_j$ are bijection for all $n-i\leq j\leq n$, we have
\begin{align*}
S_{K}^n(i)&=|S_{K}^n(\epsilon_n,i)|=|\cup_{j=n-i}^{n}S_{K}^n(\epsilon_n,i,j)|=\sum_{j=n-i}^{n}|S_{K}^n(\epsilon_n,i,j)|\\
&=\sum_{j=n-i}^{n}|S_{K}^{n-1}(\epsilon_{n-1},i-(n-j))|=\sum_{j=0}^{i}S_{K}^{n-1}(j).\nonumber
\end{align*}
Similarly, for all $5\leq n$ and $n\leq i \leq \lfloor\frac{\binom{n}{2}}{2}\rfloor$, then $\lfloor\frac{\binom{n}{2}}{2}\rfloor\leq\binom{n-1}{2}$. Thus, for all $i-(n-1)\leq j\leq\lfloor\frac{\binom{n}{2}}{2}\rfloor$, $S_{K}^{n-1}(j)$
exists. So, we also prove that
\begin{equation}
S_{K}^n(i)=\sum_{j=i-(n-1)}^{i}S_{K}^{n-1}(j).\nonumber
\end{equation}
\end{proof}

Furthermore, we give the recursive formula of $S_{K}^n(i)$ for all $4\leq n$ and $4\leq i\leq n-1$ in the following lemma. For convenience, for any function $f(t)$ and two positive integers $i<t$, we denote $\sum_{l=t}^{i}f(l)=0$.
\begin{lemma}
\label{Lm4}
For all $4\leq n$ and $4\leq i\leq n-1$, there exists a unique integer $t$ such that $\binom{t-1}{2}<i\leq\binom{t}{2}$ and $t\geq 4$. Then, we have
\begin{equation}
S_{K}^n(i)=S_{K}^{t}(\binom{t}{2}-i)+\sum_{l=t}^{i-1}\sum_{j=i-l}^{i-1}S_{K}^{l}(j)+\sum_{l=i}^{n-1}\sum_{j=0}^{i-1}S_{K}^{l}(j).\label{eq5}
\end{equation}
\end{lemma}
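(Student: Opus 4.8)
The plan is to unwind the recursion from Lemma \ref{Lm3} repeatedly, peeling off one "layer" of the top element at a time, until the remaining index falls into the regime where Lemma \ref{Lm2} (the reversal symmetry) can be invoked to terminate the recursion. First I would establish the existence and uniqueness of the integer $t$ with $\binom{t-1}{2} < i \le \binom{t}{2}$ and $t \ge 4$: since $4 \le i \le n-1$, and $\binom{3}{2} = 3 < 4 \le i$, such a $t$ exists; uniqueness is clear because $\binom{t}{2}$ is strictly increasing in $t$. Note that this forces $t \le n$, and in fact $t \le i$ when $i \ge \binom{t-1}{2}+1$, so the index ranges appearing in \eqref{eq5} are consistent.

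The main computation is an induction on $n$ (equivalently, iterated substitution). Starting from \eqref{eq3}, $S_K^n(i) = \sum_{j=0}^{i} S_K^{n-1}(j)$. I would split this sum as $S_K^{n-1}(i) + \sum_{j=0}^{i-1} S_K^{n-1}(j)$, and then apply \eqref{eq3} again to the term $S_K^{n-1}(i)$ — provided $i \le n-2$, i.e., $i$ is still in the "small" regime $2 \le i \le (n-1)-1$ for the index $n-1$. Iterating, at each stage $l$ descending from $n-1$ down to some stopping value, we accumulate a block $\sum_{j=0}^{i-1} S_K^{l}(j)$ and carry forward the single term $S_K^{l}(i)$. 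This continues as long as $i \le l-1$; once we reach $l = i$, the term $S_K^{i}(i) = S_K^{i}(\binom{i}{2} - (\binom{i}{2}-i))$... more precisely, once $l$ drops below $i$ we can no longer use \eqref{eq3} in the same form. Here is where the bookkeeping becomes delicate: for $l$ in the range $t \le l \le i-1$ one must instead use \eqref{eq4} (the "large $i$" recursion, valid when $l \le i \le \lfloor \binom{l}{2}/2 \rfloor$), which replaces the lower summation limit $0$ by $i - l$, producing the blocks $\sum_{j=i-l}^{i-1} S_K^{l}(j)$; and finally at $l = t$ one reaches an index $i$ with $\binom{t-1}{2} < i \le \binom{t}{2}$, i.e., an index at or above the midpoint of the range for $S_n$ with $n = t$, at which point Lemma \ref{Lm2} gives $S_K^{t}(i) = S_K^{t}(\binom{t}{2} - i)$, terminating the recursion and contributing the leading term of \eqref{eq5}. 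Assembling the three families of contributions — the reversal term $S_K^t(\binom{t}{2}-i)$, the "transitional" double sum $\sum_{l=t}^{i-1}\sum_{j=i-l}^{i-1} S_K^l(j)$, and the "bulk" double sum $\sum_{l=i}^{n-1}\sum_{j=0}^{i-1} S_K^l(j)$ — yields \eqref{eq5}.

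I expect the main obstacle to be verifying that at each step the hypotheses of the applicable recursion from Lemma \ref{Lm3} (and of Lemma \ref{Lm2}) are actually satisfied for the current pair of parameters — in particular checking the chain of inequalities $i \le \lfloor \binom{l}{2}/2 \rfloor$ for $t \le l \le i-1$ (so that \eqref{eq4} applies) and $\binom{t-1}{2} < i \le \binom{t}{2}$ with $i \le \lfloor \binom{t}{2}/2 \rfloor$ failing appropriately so that the reversal step is the right one. One must also handle the degenerate cases where $t = i$ or where the range $t \le l \le i-1$ is empty (captured by the convention $\sum_{l=t}^{i-1} = 0$ when $t > i-1$), and confirm that the convention $\sum_{l=t}^{i} f(l) = 0$ for $i < t$ makes the boundary cases (e.g. $i = 4$, so $t = 3$... but then $t \ge 4$ fails, so actually $i=4 \Rightarrow t=4$) come out correctly. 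Once the index arithmetic is pinned down, the identity is a routine telescoping of the recursions.
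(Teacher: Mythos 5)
Your proposal is correct and follows essentially the same route as the paper: telescope the recursion \eqref{eq3} down from level $n$ to level $i$ to produce the bulk double sum, then telescope \eqref{eq4} from level $i$ down to level $t$ to produce the transitional double sum, and finally use the symmetry of Lemma \ref{Lm2} at level $t$ (where $0\le\binom{t}{2}-i<i$) to terminate with the term $S_K^{t}(\binom{t}{2}-i)$. The index-range caveats you flag (including the degenerate case $i=t=4$, handled by the empty-sum convention) are exactly the points the paper checks.
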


\begin{proof}
When $4\leq n$ and $4\leq i\leq n-1$, by $(\ref{eq3})$, we have
\begin{equation}
S_{K}^n(i)-S_{K}^{n-1}(i)=\sum_{j=0}^{i-1}S_{K}^{n-1}(j).\label{eq6}
\end{equation}
In $(\ref{eq6})$, we set $n$ to $i+1,...,n$ and obtain $n-i$ equations, respectively. Then by summing all the equations, we have
\begin{equation}
S_{K}^n(i)-S_{K}^{i}(i)=\sum_{l=i}^{n-1}\sum_{j=0}^{i-1}S_{K}^{l}(j).\label{eq7}
\end{equation}

For $j< i$ and $i<n$,  if $S_{K}^n(j)$ and $S_{K}^i(i)$ are known, then by $(\ref{eq7})$ we can compute $S_{K}^n(i)$. In the following, we will compute $S_{K}^i(i)$. By $(\ref{eq4})$, for $i\leq\binom{i-1}{2}$ (i.e., $4\leq i$), we obtain that
\begin{equation}
S_{K}^i(i)-S_{K}^{i-1}(i)=\sum_{j=1}^{i-1}S_{K}^{i-1}(j).\label{eq8}
\end{equation}
For $4\leq i$, we can find an integer $t$ such that $\binom{t-1}{2}<i\leq\binom{t}{2}$ and $t\geq 4$. Then, $\binom{t}{2}+\frac{(t-1)(t-4)}{2}<2i$ and $t<\binom{t-1}{2}$ for $5\leq t$. When $i=4$, we have $t=4$. When $5\leq i$, we have $4\leq t$, $i\leq\binom{t}{2}<2i$, and $t< i$.

Thus, we obtain
\begin{equation}
0\leq\binom{t}{2}-i<i.\label{eq9}
\end{equation}
When $i=4$, $S_{K}^{4}(4)=S_{K}^{4}(\binom{4}{2}-4)=S_{K}^{4}(2)$.

Similarly, when $4<i$, in $(\ref{eq4})$, we set $n$ to $t+1,...,i$ and obtain $i-t$ equations, respectively. By summing all the equations, we have
\begin{equation}
S_{K}^i(i)-S_{K}^{t}(i)=\sum_{l=t}^{i-1}\sum_{j=i-l}^{i-1}S_{K}^{l}(j).\label{eq10}
\end{equation}
Combining $(\ref{eq2})$, $(\ref{eq9})$, and $(\ref{eq10})$, we have
\begin{equation}
S_{K}^i(i)=S_{K}^{t}(\binom{t}{2}-i)+\sum_{l=t}^{i-1}\sum_{j=i-l}^{i-1}S_{K}^{l}(j).\label{eq11}
\end{equation}
When $4\leq i$, we also have $S_{K}^i(i)=S_{K}^{t}(\binom{t}{2}-i)+\sum_{l=t}^{i-1}\sum_{j=i-l}^{i-1}S_{K}^{l}(j)$. When $i=t=4$, the second term (i.e., $\sum_{l=4}^{3}\sum_{j=i-l}^{i-1}S_{K}^{l}(j)$) is zero.
Finally, by $(\ref{eq7})$ and $(\ref{eq11})$, we can obtain the expression of $S_{K}^n(i)$ in the above lemma.
\end{proof}

Specifically, we give the formulas of $S_{K}^n(2)$ and $S_{K}^n(3)$ for all $3\leq n$ as follows.
\begin{lemma}
\label{Lm5}
For all $3\leq n$, we have
\begin{equation}
S_{K}^n(2)=\frac{n(n-1)}{2}-1,\label{eq12}
\end{equation}
\begin{equation}
S_{K}^n(3)=\frac{n^3-7n}{6}.\label{eq13}
\end{equation}
\end{lemma}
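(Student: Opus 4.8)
The plan is to prove both identities by induction on $n$, using the recursion $(\ref{eq3})$ of Lemma~\ref{Lm3} for the inductive step, the known values $S_K^n(0)=1$ and $S_K^n(1)=n-1$, and Lemma~\ref{Lm2} to settle the base case $n=3$ (where $\binom{3}{2}=3$, so the radii $2$ and $3$ lie beyond $\lfloor\binom{3}{2}/2\rfloor=1$ and are reached via reversal).

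For $(\ref{eq12})$, the base case is $S_K^3(2)=S_K^3(1)=2=\frac{3\cdot2}{2}-1$ by $(\ref{eq2})$. For $n\geq4$ one has $2\leq 2\leq n-1$, so $(\ref{eq3})$ with $i=2$ gives
\[
S_K^n(2)=\sum_{j=0}^{2}S_K^{n-1}(j)=1+(n-2)+S_K^{n-1}(2),
\]
that is, $S_K^n(2)=S_K^{n-1}(2)+(n-1)$; substituting the induction hypothesis $S_K^{n-1}(2)=\frac{(n-1)(n-2)}{2}-1$ and simplifying gives $\frac{n(n-1)}{2}-1$.

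For $(\ref{eq13})$, the base case is $S_K^3(3)=S_K^3(0)=1=\frac{3^3-7\cdot3}{6}$ by $(\ref{eq2})$. For $n\geq4$ one has $2\leq 3\leq n-1$, so $(\ref{eq3})$ with $i=3$ gives
\[
S_K^n(3)=\sum_{j=0}^{3}S_K^{n-1}(j)=1+(n-2)+S_K^{n-1}(2)+S_K^{n-1}(3).
\]
Here I feed in the already-established formula $(\ref{eq12})$ for $S_K^{n-1}(2)$ and the induction hypothesis $S_K^{n-1}(3)=\frac{(n-1)^3-7(n-1)}{6}$; the claim then reduces to the polynomial identity
\[
\frac{n^3-7n}{6}-\frac{(n-1)^3-7(n-1)}{6}=(n-2)+\frac{(n-1)(n-2)}{2},
\]
both sides of which equal $\frac{n^2-n-2}{2}$ after using $n^3-(n-1)^3=3n^2-3n+1$.

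The argument has no real obstacle; the only points needing care are bookkeeping — checking that the small instances ($n=3$ for the base, and the first application of $(\ref{eq3})$ at $n=4$, where $i=n-1$) meet the hypotheses of Lemmas~\ref{Lm2} and~\ref{Lm3}, and carrying out the elementary algebra in the two inductive steps correctly.
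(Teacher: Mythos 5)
Your proof is correct and rests on the same ingredients as the paper's: the recursion $(\ref{eq3})$ of Lemma~\ref{Lm3} together with the known values $S_K^n(0)=1$, $S_K^n(1)=n-1$ and the small base cases (which the paper takes as $S_K^2(2)=0$ and $S_K^3(3)=1$, while you obtain $S_K^3(2)$ and $S_K^3(3)$ via the reversal symmetry of Lemma~\ref{Lm2}). The only difference is presentational — you verify the closed forms by induction on $n$, whereas the paper telescopes the same recursion and sums $\sum_{l} l$ and $\sum_l \frac{l^2+l-2}{2}$ explicitly — so this is essentially the paper's argument.
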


\begin{proof}
When $i=2$, by $(\ref{eq6})$, we have
\begin{equation}
S_{K}^n(2)-S_{K}^{2}(2)=\sum_{l=2}^{n-1}\sum_{j=0}^{1}S_{K}^{l}(j).\label{eq14}
\end{equation}
Since $S_{K}^n(0)=1$, $S_{K}^n(1)=n-1$ and $S_{K}^{2}(2)=0$, by $(\ref{eq14})$, we have
\begin{equation}
S_{K}^n(2)=\sum_{l=2}^{n-1}\sum_{j=0}^{1}S_{K}^{l}(j)=\sum_{l=2}^{n-1}l=\frac{n(n-1)}{2}-1.\label{eq15}
\end{equation}

Similarly, when $i=3$, by $(\ref{eq6})$, we have
\begin{equation}
S_{K}^n(3)-S_{K}^{3}(3)=\sum_{l=3}^{n-1}\sum_{j=0}^{2}S_{K}^{l}(j).\label{eq16}
\end{equation}
Since $S_{K}^n(0)=1$, $S_{K}^n(1)=n-1$, $S_{K}^{n}(2)=\frac{n(n-1)}{2}-1$, and $S_{K}^{3}(3)=1$, by $(\ref{eq16})$, we have
\begin{equation}
S_{K}^n(3)=S_{K}^{3}(3)+\sum_{l=3}^{n-1}\sum_{j=0}^{2}S_{K}^{l}(j)=1+\sum_{l=3}^{n-1}\frac{l^2+l-2}{2}=\frac{n^3-7n}{6}.\label{eq17}
\end{equation}

According to $(\ref{eq15})$ and $(\ref{eq17})$, we can obtain the expressions of $S_{K}^n(2)$ and $S_{K}^n(3)$ as $(\ref{eq12})$ and $(\ref{eq13})$, respectively.
\end{proof}

Here, we easily obtain $S_{K}^2(0)=S_{K}^2(1)=1$. By Lemma $\ref{Lm5}$, when $n=3$, we have $S_{K}^3(0)=1$, $S_{K}^3(1)=2$, $S_{K}^3(2)=2$, and $S_{K}^3(3)=1$.  By Lemma $\ref{Lm5}$ and Lemma $\ref{Lm2}$, we have $S_{K}^4(0)=1$, $S_{K}^4(1)=3$, $S_{K}^4(2)=5$, $S_{K}^4(3)=6$, $S_{K}^4(4)=5$, $S_{K}^4(5)=3$, and $S_{K}^4(6)=1$.

If all the $S_{K}^n(j)$ for all $n$ and $j\leq i-1$ are known, by Lemma \ref{Lm4}, we can compute $S_{K}^n(i)$ for $4\leq n$ and $4\leq i\leq n-1$. Next we present an example to compute $S_{K}^n(i)$  in Lemma \ref{Lm4}.
\begin{example}
When $i=4$, $\binom{3}{2}<4\leq\binom{4}{2}$. Then, we obtain $t=4$ in Lemma $\ref{Lm4}$. Furthermore, by $(\ref{eq5})$, we have
\begin{equation}
S_{K}^n(4)=S_{K}^{4}(\binom{4}{2}-4)+\sum_{l=4}^{3}\sum_{j=i-l}^{i-1}S_{K}^{l}(j)+\sum_{l=4}^{n-1}\sum_{j=0}^{3}S_{K}^{l}(j).\nonumber
\end{equation}
By Lemma $\ref{Lm5}$, we have $S_{K}^{4}(\binom{4}{2}-4)=S_{K}^{4}(2)=5$. Thus,
\begin{equation}
S_{K}^n(4)=5+\sum_{l=4}^{n-1}\big(1+(l-1)+\frac{l(l-1)}{2}-1+\frac{l^3-7l}{6}\big)=\frac{n(n+1)(n^2+n-14)}{24}.\label{eq18}
\end{equation}
\end{example}

In the following, we give the recursive formula of $S_{K}^n(i)$ for all $5\leq n$ and $n\leq i\leq \lfloor\frac{\binom{n}{2}}{2}\rfloor$.
\begin{lemma}
\label{Lm6}
For all $5\leq n$ and $n\leq i\leq \lfloor\frac{\binom{n}{2}}{2}\rfloor$, there exists a unique integer $t$ such that $\binom{t-1}{2}<i\leq\binom{t}{2}$ and $t\geq 4$. Then, we have
\begin{equation}
S_{K}^n(i)=S_{K}^{t}(\binom{t}{2}-i)+\sum_{l=t}^{i-1}\sum_{j=i-l}^{i-1}S_{K}^{l}(j)-\sum_{l=n}^{i-1}\sum_{j=i-l}^{i-1}S_{K}^{l}(j).\label{eq19}
\end{equation}
\end{lemma}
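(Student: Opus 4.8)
The plan is to mimic the derivation of Lemma \ref{Lm4}, but now the recursion \eqref{eq4} is the relevant one since $i$ is large relative to $n$. First I would fix $n$ and $i$ with $5\leq n$ and $n\leq i\leq\lfloor\binom{n}{2}/2\rfloor$, and note that because $\lfloor\binom{n}{2}/2\rfloor\leq\binom{n-1}{2}$, formula \eqref{eq4} applies at level $n$ and also at every lower level $l$ with $i\leq\binom{l-1}{2}$; the integer $t$ with $\binom{t-1}{2}<i\leq\binom{t}{2}$, $t\geq4$, is exactly the threshold where \eqref{eq4} stops being available and \eqref{eq3} takes over. From \eqref{eq4}, for each $l$ with $t<l$ one has $S_K^l(i)-S_K^{l-1}(i)=\sum_{j=i-l}^{i-1}S_K^{l-1}(j)$ (peeling off the $j=i$ term). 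The key bookkeeping step is: sum this telescoping identity over $l=t+1,\dots,i$ to get $S_K^i(i)-S_K^t(i)=\sum_{l=t}^{i-1}\sum_{j=i-l}^{i-1}S_K^l(j)$, which combined with $S_K^t(i)=S_K^t(\binom{t}{2}-i)$ from \eqref{eq2} (valid since $0\leq\binom{t}{2}-i<i$ by the same inequality used in Lemma \ref{Lm4}, namely $i\leq\binom{t}{2}<2i$ for $t\geq5$, and trivially for $t=4$) yields the expression for $S_K^i(i)$ exactly as in \eqref{eq11}.

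Next I would sum the same telescoping identity from \eqref{eq4} over $l=i+1,\dots,n$ — this is legitimate precisely because $i\leq\lfloor\binom{n}{2}/2\rfloor\leq\binom{l-1}{2}$ for all such $l$ — to obtain $S_K^n(i)-S_K^i(i)=\sum_{l=i}^{n-1}\sum_{j=i-l}^{i-1}S_K^l(j)$. Substituting the value of $S_K^i(i)$ from \eqref{eq11} then gives
\[
S_K^n(i)=S_K^t\Big(\binom{t}{2}-i\Big)+\sum_{l=t}^{i-1}\sum_{j=i-l}^{i-1}S_K^l(j)+\sum_{l=i}^{n-1}\sum_{j=i-l}^{i-1}S_K^l(j).
\]
To reach the stated form \eqref{eq19} I then rewrite the last double sum. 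Observe that $\sum_{l=i}^{n-1}\sum_{j=i-l}^{i-1}S_K^l(j)$ differs from $-\sum_{l=n}^{i-1}\sum_{j=i-l}^{i-1}S_K^l(j)$ by the full range sum: more precisely, using the convention $\sum_{l=t}^{i}f(l)=0$ when $i<t$ one checks that $\sum_{l=i}^{n-1}(\cdot)+\sum_{l=n}^{i-1}(\cdot)=0$ when $n>i$ is false — so the right manipulation is that $\sum_{l=i}^{n-1}\sum_{j=i-l}^{i-1}S_K^l(j)=-\sum_{l=n}^{i-1}\sum_{j=i-l}^{i-1}S_K^l(j)$ holds by the empty-sum convention only in the degenerate case, and in general the two bracketed double-sum expressions in \eqref{eq5} versus \eqref{eq19} must be reconciled by re-indexing the outer variable, writing $\sum_{l=i}^{n-1}=\sum_{l=t}^{n-1}-\sum_{l=t}^{i-1}$ and folding the middle term. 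I would carry out this index algebra carefully so that the three pieces collapse to the claimed $S_K^t(\binom{t}{2}-i)+\sum_{l=t}^{i-1}\sum_{j=i-l}^{i-1}S_K^l(j)-\sum_{l=n}^{i-1}\sum_{j=i-l}^{i-1}S_K^l(j)$.

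The main obstacle I anticipate is not conceptual but is precisely this last re-indexing, together with verifying that every invocation of \eqref{eq4} is within its stated range of validity — in particular that $S_K^{l-1}(j)$ is defined (i.e. $j\leq\binom{l-1}{2}$) for every $l$ and $j$ appearing in the telescoped sums, and that the boundary cases $i=t$ (where the middle double sum is empty) and $i$ close to $\lfloor\binom{n}{2}/2\rfloor$ behave correctly under the $\sum_{l=t}^{i}f(l)=0$ convention. I would dispatch these by the same inequality bookkeeping used in Lemma \ref{Lm4}: $\binom{t-1}{2}<i\leq\binom{t}{2}$ forces $t<i$ for $i\geq5$ and $t=4$ for $i=4$, and $i\leq\lfloor\binom{n}{2}/2\rfloor\leq\binom{n-1}{2}$ guarantees that $S_K^{n-1}(i)$ and all intermediate terms exist, so each telescoping step is valid.
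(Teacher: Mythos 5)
Your first half (the derivation of the closed form for $S_K^i(i)$, i.e.\ equation \eqref{eq11}, by telescoping \eqref{eq4} from level $t+1$ up to level $i$ and invoking \eqref{eq2} via $0\leq\binom{t}{2}-i<i$) is fine and is exactly what the paper does. The problem is in the second telescoping step, where you have the regime backwards: in this lemma $n\leq i$, not $i<n$. Consequently the range $l=i+1,\dots,n$ over which you propose to sum \eqref{eq4} is \emph{empty}, and the identity you claim to obtain, $S_K^n(i)-S_K^i(i)=\sum_{l=i}^{n-1}\sum_{j=i-l}^{i-1}S_K^l(j)$, is false for $i>n$: under the paper's empty-sum convention its right-hand side is $0$, while $S_K^i(i)-S_K^n(i)$ is in general strictly positive. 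The correct step (the paper's equation \eqref{eq20}) is to apply \eqref{eq4} with its upper parameter set to $n+1,n+2,\dots,i$ (this is legitimate because $m\leq i\leq\lfloor\binom{n}{2}/2\rfloor\leq\lfloor\binom{m}{2}/2\rfloor$ for each such level $m$), giving after telescoping
\begin{equation*}
S_{K}^i(i)-S_{K}^{n}(i)=\sum_{l=n}^{i-1}\sum_{j=i-l}^{i-1}S_{K}^{l}(j),
\end{equation*}
and then to subtract this from \eqref{eq11}; the minus sign in \eqref{eq19} comes out directly, with the third term vanishing exactly when $i=n$.

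Your closing paragraph implicitly concedes this: you try to convert $+\sum_{l=i}^{n-1}(\cdot)$ into $-\sum_{l=n}^{i-1}(\cdot)$, but that ``oriented sum'' identity is not available under the stated convention $\sum_{l=t}^{i}f(l)=0$ for $i<t$ (take $i=n+2$: the left side is $0$, the right side is not), and the proposed re-indexing $\sum_{l=i}^{n-1}=\sum_{l=t}^{n-1}-\sum_{l=t}^{i-1}$ is never carried out and cannot rescue an identity that is false to begin with. So the proposal as written does not prove \eqref{eq19}; it needs the second telescoping redone in the direction $n\to i$ as above, after which the argument coincides with the paper's proof.
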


\begin{proof}
When $5\leq n$ and $n\leq i\leq \lfloor\frac{\binom{n}{2}}{2}\rfloor$, in $(\ref{eq4})$, we set $n$ to $n+1,...,i$, respectively. Then we obtain $n-i$ equations and sum all the equations. Thus, we have
\begin{equation}
S_{K}^i(i)-S_{K}^{n}(i)=\sum_{l=n}^{i-1}\sum_{j=i-l}^{i-1}S_{K}^{l}(j).\label{eq20}
\end{equation}
By $(\ref{eq11})$ and $(\ref{eq20})$, we have
\begin{equation}
S_{K}^n(i)=S_{K}^{t}(\binom{t}{2}-i)+\sum_{l=t}^{i-1}\sum_{j=i-l}^{i-1}S_{K}^{l}(j)-\sum_{l=n}^{i-1}\sum_{j=i-l}^{i-1}S_{K}^{l}(j).\nonumber
\end{equation}
When $i=n$, the third term (i.e., $\sum_{l=n}^{i-1}\sum_{j=i-l}^{i-1}S_{K}^{l}(j)$) is zero.
\end{proof}

\begin{example}
When $i=5$ and $n=5$, we have $\binom{3}{2}<5\leq\binom{4}{2}$. Then, we obtain $t=4$ in Lemma $\ref{Lm6}$. Furthermore, by $(\ref{eq19})$, we have
\begin{equation}
S_{K}^5(5)=S_{K}^{4}(\binom{4}{2}-5)+\sum_{l=4}^{4}\sum_{j=5-l}^{4}S_{K}^{l}(j).\nonumber
\end{equation}
Thus, $S_{K}^5(5)=S_{K}^{4}(1)+\sum_{j=1}^{4}S_{K}^{4}(j)=3+(3+5+6+5)=22$.
\end{example}

For every $6\leq n$, due to $i=5\leq n-1$, $S_{K}^n(5)$ can be computed by Lemma $\ref{Lm4}$.

Hence, if $S_{K}^n(j)$ are known for all $1\leq j\leq i-1 $ and $n$, we will compute $S_{K}^n(i)$ for all $n$ in the next two steps. For $5\leq i$, there exists a unique integer $t$ such that $\binom{t-1}{2}<i\leq\binom{t}{2}$. Then, for every $2\leq l\leq t-1$, $S_{K}^l(i)=0$. First, when $t\leq l\leq i$, if $i>\lfloor\frac{\binom{l}{2}}{2}\rfloor$, we have $S_{K}^l(i)=S_{K}^l(\binom{l}{2}-i)$ where $\binom{l}{2}-i<i$; otherwise, by Lemma $\ref{Lm6}$, we compute $S_{K}^l(i)$ for $i\leq \lfloor\frac{\binom{l}{2}}{2}\rfloor$. Second, when $i+1\leq l$, we compute $S_{K}^l(i)$ by Lemma $\ref{Lm4}$.

When $i=5$, we can compute $S_{K}^n(5)$ for all $n$. Here, $t=4$. Then, $S_{K}^4(5)=S_{K}^4(\binom{4}{2}-5)=S_{K}^4(1)=3$ and $S_{K}^5(5)=22$ by Lemma $\ref{Lm6}$ in Example $2$. In the following, we will give the formula of $S_{K}^n(5)$ for all $6\leq n$ by Lemma $\ref{Lm4}$.
\begin{example}
When $i=5$ and $6\leq n$, by Lemma $\ref{Lm4}$ and $(\ref{eq7})$, we have
\begin{equation}
S_{K}^n(5)=S_{K}^{5}(5)+\sum_{l=5}^{n-1}\sum_{j=0}^{4}S_{K}^{l}(j).\nonumber
\end{equation}
By Examples $1$ and $2$ and Lemma $\ref{Lm5}$, we have
\begin{align}
S_{K}^n(5)&=22+\sum_{l=5}^{n-1}\big(1+(l-1)+\frac{l(l-1)}{2}-1+\frac{l^3-7l}{6}+\frac{l(l+1)(l^2+l-14)}{24}\big)\nonumber\\
&=\frac{(n-1)(n^4+6n^3-9n^2-74n-120)}{120}\label{eq21}
\end{align}
for all $5\leq n$.
\end{example}

By Lemmas $\ref{Lm2}$, $\ref{Lm4}$, and $\ref{Lm6}$, we can obtain the property of $S_{K}^n(i)$ for all $6\leq i$ and $n$ as follows.
\begin{proposition}
\label{Prp1}
When $6\leq i$, we can compute $S_{K}^n(i)$ for all $5\leq n$ by using Lemmas $\ref{Lm2}$, $\ref{Lm4}$, and $\ref{Lm6}$.
\end{proposition}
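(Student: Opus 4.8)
The plan is to show that the two recursions already established — Lemma~\ref{Lm4} for the "small" range $4\le i\le n-1$ and Lemma~\ref{Lm6} for the "large" range $n\le i\le\lfloor\binom{n}{2}/2\rfloor$ — together with the reflection symmetry of Lemma~\ref{Lm2}, cover every pair $(n,i)$ with $5\le n$ and $6\le i$, and that each formula only refers to values $S_K^{l}(j)$ with $j\le i-1$ (or with a strictly smaller radius after a reflection). This makes the computation well-defined by strong induction on $i$: assuming all $S_K^{n}(j)$ are known for every $n$ and every $j\le i-1$, we produce all $S_K^{n}(i)$.

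First I would fix $i\ge 6$ and locate the unique $t\ge 4$ with $\binom{t-1}{2}<i\le\binom{t}{2}$; this is exactly the $t$ appearing in Lemmas~\ref{Lm4} and~\ref{Lm6}. For $2\le l\le t-1$ one has $i>\binom{l}{2}\ge w_K$-max, so $S_K^{l}(i)=0$ and nothing is needed. Next, for $t\le l\le i$ I would split on the position of $i$ relative to the midpoint $\lfloor\binom{l}{2}/2\rfloor$: if $i>\lfloor\binom{l}{2}/2\rfloor$ then by Lemma~\ref{Lm2}, $S_K^{l}(i)=S_K^{l}(\binom{l}{2}-i)$, and the inequality $\binom{l}{2}-i<i$ (which follows from $i>\binom{l}{2}/2$) shows this is a value at a strictly smaller radius, hence available; otherwise $i\le\lfloor\binom{l}{2}/2\rfloor$, and since $l\le i\le\binom{t}{2}$ puts us in the hypotheses of Lemma~\ref{Lm6}, formula~(\ref{eq19}) expresses $S_K^{l}(i)$ entirely in terms of $S_K^{l'}(j)$ with $j\le i-1$ and $S_K^{t}(\binom{t}{2}-i)$, the latter again being a smaller radius by~(\ref{eq9}). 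Finally, for $l\ge i+1$ we are in the range $4\le i\le l-1$ of Lemma~\ref{Lm4}, and formula~(\ref{eq5}) again uses only radii $\le i-1$ together with $S_K^{t}(\binom{t}{2}-i)$. Iterating $l$ upward from $t$ thus determines $S_K^{n}(i)$ for every $n\ge5$.

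The bookkeeping point that needs care — and the only place I expect any friction — is the boundary between the two lemmas and the consistency of the induction: one must check that the case $i\le\lfloor\binom{n}{2}/2\rfloor$ (Lemma~\ref{Lm6}) and the case $i\le n-1$ (Lemma~\ref{Lm4}) genuinely exhaust all $(n,i)$ with $i\ge6$, i.e.\ that whenever $i\ge n$ one automatically has $i\le\lfloor\binom{n}{2}/2\rfloor$ or else the reflection $S_K^{n}(i)=S_K^{n}(\binom{n}{2}-i)$ drops into an already-treated range; this is immediate once $n\ge5$ since then $n-1\le\lfloor\binom{n}{2}/2\rfloor$. One also needs the elementary inequalities $\binom{t}{2}+\tfrac{(t-1)(t-4)}{2}<2i$ and $t<\binom{t-1}{2}$ for $t\ge5$ — already recorded in the proof of Lemma~\ref{Lm4} — to guarantee $0\le\binom{t}{2}-i<i$, so that every reflection used in the induction strictly decreases the radius. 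With these observations the proposition follows by a routine induction, and I would present it in two or three lines citing Lemmas~\ref{Lm2}, \ref{Lm4}, and~\ref{Lm6} directly.
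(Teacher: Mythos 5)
Your proposal is correct and follows essentially the same route as the paper: fix $i\ge 6$, take the unique $t$ with $\binom{t-1}{2}<i\le\binom{t}{2}$, note $S_K^l(i)=0$ for $l\le t-1$, use Lemma \ref{Lm2} or Lemma \ref{Lm6} for $t\le l\le i$ depending on whether $i$ exceeds $\lfloor\binom{l}{2}/2\rfloor$, and Lemma \ref{Lm4} for $l\ge i+1$, with induction on the radius guaranteeing all needed values $S_K^{l}(j)$, $j\le i-1$, are available. Your write-up is in fact somewhat more explicit than the paper's about why each reflection strictly decreases the radius and why the cases exhaust all $(n,i)$.
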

\begin{proof}
For all $0\leq i\leq 5$ and $3\leq n$, all the $S_{K}^n(i)$ are computed. We can compute $S_{K}^n(i)$ for all $n$ by using $S_{K}^n(j)$ for all $1\leq j\leq i-1 $ and $n$.

First, we find an integer $t$ such that $\binom{t-1}{2}<i\leq\binom{t}{2}$. For every $t\leq l\leq i$, if $i>\lfloor\frac{\binom{l}{2}}{2}\rfloor$, we have $S_{K}^l(i)=S_{K}^l(\binom{l}{2}-i)$ where $\binom{l}{2}-i<i$; else if $i\leq \lfloor\frac{\binom{l}{2}}{2}\rfloor$, we compute $S_{K}^l(i)$ by Lemma $\ref{Lm6}$. Second, for every $i+1\leq l$, we compute $S_{K}^l(i)$ by Lemma $\ref{Lm4}$. So, we can obtain $S_{K}^n(i)$ for all $5\leq n$ and $6\leq i$.
\end{proof}

\subsection{The size of a ball of radius $r$ in $S_n$ under the Kendall $\tau$-metric}
In this subsection, we will give the size of a ball with radius $r$ in $S_n$ under the Kendall $\tau$-metric and give recursive formula of $B_K^n(r)$ by using $S_K^n(r)$. We easily obtain the following lemma about the relationship between $B_K^n(r)$ and $S_K^n(r)$.

\begin{lemma}
\label{Lm7}
For any $0\leq r\leq \binom{n}{2}$, we have
\begin{equation}
B_{K}^n(r)=\sum_{l=0}^{r}S_{K}^{n}(l).\label{eq22}
\end{equation}
\end{lemma}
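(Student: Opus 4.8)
The statement to prove is Lemma~\ref{Lm7}: $B_K^n(r) = \sum_{l=0}^{r} S_K^n(l)$ for $0 \le r \le \binom{n}{2}$.

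\medskip

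The plan is to argue directly from the definitions of the Kendall $\tau$-ball and $\tau$-sphere. First I would recall that $B_K^n(\epsilon_n,r) = \{\sigma \in S_n : d_K(\sigma,\epsilon_n) \le r\}$ and $S_K^n(\epsilon_n,l) = \{\sigma \in S_n : d_K(\sigma,\epsilon_n) = l\}$, and that the sizes $B_K^n(r)$ and $S_K^n(l)$ do not depend on the choice of center (as already noted in the preliminaries), so it suffices to work with the ball and spheres centered at $\epsilon_n$. The key observation is that for a permutation $\sigma$, the condition $d_K(\sigma,\epsilon_n) \le r$ holds if and only if $d_K(\sigma,\epsilon_n) = l$ for exactly one value $l \in \{0,1,\dots,r\}$, since $d_K(\sigma,\epsilon_n)$ is a single nonnegative integer that is at most $\binom{n}{2} = w_K$-max, hence at most $r$ whenever it lies in $\{0,\dots,r\}$.

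\medskip

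Concretely, I would write $B_K^n(\epsilon_n,r) = \bigcup_{l=0}^{r} S_K^n(\epsilon_n,l)$, noting that every $\sigma$ in the ball has a well-defined weight $w_K(\sigma) = d_K(\sigma,\epsilon_n)$ lying in $\{0,1,\dots,r\}$, so it belongs to exactly one $S_K^n(\epsilon_n,l)$ in this range; conversely each $S_K^n(\epsilon_n,l)$ with $l \le r$ is contained in the ball. The spheres $S_K^n(\epsilon_n,l)$ for distinct $l$ are pairwise disjoint because a permutation cannot have two different Kendall $\tau$-distances to $\epsilon_n$. Taking cardinalities of this disjoint union then gives $B_K^n(r) = |B_K^n(\epsilon_n,r)| = \sum_{l=0}^{r} |S_K^n(\epsilon_n,l)| = \sum_{l=0}^{r} S_K^n(l)$, which is the claimed identity.

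\medskip

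There is essentially no obstacle here: the lemma is a routine ``a ball is the disjoint union of its concentric spheres'' statement, and the only mild point to get right is bookkeeping at the boundary, namely that $r$ can be as large as $\binom{n}{2}$ (so the sum runs over the full range of possible weights), and the already-established convention $S_K^n(l) = 0$ for $l \ge \binom{n}{2}+1$ is not even needed since the index $l$ never exceeds $\binom{n}{2}$. I would state the proof in two or three sentences and move on.
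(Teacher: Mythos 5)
Your proof is correct and is exactly the argument the paper implicitly relies on (the paper states Lemma~\ref{Lm7} without proof as an easy consequence of the definitions): the ball centered at $\epsilon_n$ is the disjoint union of the spheres of radii $0$ through $r$, and taking cardinalities gives the identity. Nothing further is needed.
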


Given $S_K^n(i)$ for all $0\leq i\leq r-1$, by Lemmas $\ref{Lm4}$, $\ref{Lm6}$ and $\ref{Lm7}$, we easily obtain the recursion formula of $B_K^n(r)$ in the following theorem.
\begin{theorem}
\label{Thm1}
Suppose $S_K^n(i)$ are known for all $0\leq i\leq r-1$ and $5\leq n$. If $4\leq r\leq \lfloor\frac{\binom{n}{2}}{2}\rfloor$, there exists a unique integer $t$ such that $\binom{t-1}{2}<r\leq\binom{t}{2}$. When $4\leq r \leq n-1$, we have
\begin{equation}
B_{K}^n(r)=\sum_{l=0}^{r-1}S_{K}^{n}(l)+S_{K}^{t}(\binom{t}{2}-r)+\sum_{l=t}^{r-1}\sum_{j=r-l}^{r-1}S_{K}^{l}(j)+\sum_{l=r}^{n-1}\sum_{j=0}^{r-1}S_{K}^{l}(j).\label{eq23}
\end{equation}
When $n\leq r\leq \lfloor\frac{\binom{n}{2}}{2}\rfloor$,  we have
\begin{equation}
B_{K}^n(r)=\sum_{l=0}^{r-1}S_{K}^{n}(l)+S_{K}^{t}(\binom{t}{2}-r)+\sum_{l=t}^{r-1}\sum_{j=r-l}^{r-1}S_{K}^{l}(j)-\sum_{l=n}^{r-1}\sum_{j=r-l}^{r-1}S_{K}^{l}(j).\label{eq24}
\end{equation}
\end{theorem}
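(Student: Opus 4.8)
The plan is to reduce the claim directly to Lemma~\ref{Lm7} together with Lemmas~\ref{Lm4} and~\ref{Lm6}, with no new combinatorial construction required. First I would apply Lemma~\ref{Lm7} to split off the outermost sphere, writing $B_{K}^n(r)=\sum_{l=0}^{r}S_{K}^{n}(l)=\sum_{l=0}^{r-1}S_{K}^{n}(l)+S_{K}^{n}(r)$. Since $S_{K}^n(l)$ is assumed known for every $0\le l\le r-1$ and $5\le n$, the whole task comes down to rewriting the single top term $S_{K}^{n}(r)$ in terms of sphere sizes of strictly smaller radius.

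Next I would dispose of the auxiliary integer $t$. Because $\binom{t}{2}$ is strictly increasing in $t$ for $t\ge 1$, the half-open intervals $\bigl(\binom{t-1}{2},\binom{t}{2}\bigr]$ partition the positive integers, so a unique $t$ with $\binom{t-1}{2}<r\le\binom{t}{2}$ exists; and from $r\ge 4>3=\binom{3}{2}$ we get $t\ge 4$, which is exactly the standing hypothesis in Lemmas~\ref{Lm4} and~\ref{Lm6}. The same two defining inequalities give $0\le\binom{t}{2}-r<r$ (this is $(\ref{eq9})$ with $i=r$), so $S_{K}^{t}(\binom{t}{2}-r)$ is a legitimate sphere count of radius below $r$, and, if one wishes, further reducible by Lemma~\ref{Lm2}.

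With these preliminaries the two cases are then immediate substitutions. For $4\le r\le n-1$, Lemma~\ref{Lm4} applies with $i=r$ --- its hypotheses $4\le n$ and $4\le r\le n-1$ hold --- and plugging its expression for $S_{K}^{n}(r)$ into $B_{K}^n(r)=\sum_{l=0}^{r-1}S_{K}^{n}(l)+S_{K}^{n}(r)$ gives $(\ref{eq23})$. For $n\le r\le\bigl\lfloor\binom{n}{2}/2\bigr\rfloor$, Lemma~\ref{Lm6} applies with $i=r$ --- its hypotheses $5\le n$ and $n\le r\le\bigl\lfloor\binom{n}{2}/2\bigr\rfloor$ hold --- and the same substitution gives $(\ref{eq24})$; when $r=n$ the last double sum is empty and the formula degenerates as it should, in accordance with the convention $\sum_{l=t}^{i}f(l)=0$ (for fixed $i<t$) adopted just before Lemma~\ref{Lm4}.

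The only place that needs genuine care --- and it is the closest thing to an obstacle, though still routine --- is the boundary bookkeeping: verifying that the ranges $4\le r\le n-1$ and $n\le r\le\bigl\lfloor\binom{n}{2}/2\bigr\rfloor$ together exhaust $4\le r\le\bigl\lfloor\binom{n}{2}/2\bigr\rfloor$ for $n\ge 5$; checking that every argument of an $S_{K}^{l}(\cdot)$ appearing on the right-hand sides is admissible (nonnegative and at most $\binom{l}{2}$, or else $0$ by the convention $S_{K}^{l}(r)=0$ for $r\ge\binom{l}{2}+1$); and reading each empty inner or outer sum as $0$. This is the same index-chasing already performed inside the proofs of Lemmas~\ref{Lm4} and~\ref{Lm6}, so I expect no real difficulty.
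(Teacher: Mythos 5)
Your proposal is correct and follows exactly the paper's route: the paper also obtains Theorem~\ref{Thm1} by writing $B_{K}^n(r)=\sum_{l=0}^{r}S_{K}^{n}(l)$ via Lemma~\ref{Lm7} and then substituting the expressions for $S_{K}^{n}(r)$ from Lemma~\ref{Lm4} (when $4\leq r\leq n-1$) and Lemma~\ref{Lm6} (when $n\leq r\leq\lfloor\binom{n}{2}/2\rfloor$). Your additional checks on the existence and uniqueness of $t$ and on the empty-sum conventions are the same bookkeeping already carried out in those lemmas.
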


Specially, we have $B_{K}^n(0)=1$ and $B_{K}^n(1)=n$. When $r=2$, for all $n\geq 2$, we have
\begin{equation}
B_{K}^n(2)=\sum_{l=0}^{2}S_{K}^{n}(l)=(1+n-1+\frac{n(n-1)}{2}-1)=\frac{(n+2)(n-1)}{2}.\label{eq25}
\end{equation}
When $r=3$, for all $n\geq 3$, we have
\begin{equation}
B_{K}^n(3)=\sum_{l=0}^{3}S_{K}^{n}(l)=(1+n-1+\frac{n(n-1)}{2}-1+\frac{n^3-7n}{6})=\frac{(n+1)(n^2+2n-6)}{6}.\label{eq26}
\end{equation}

\begin{example}
When $r=4$ and $4\leq n$, by Example $1$ and Theorem $\ref{Thm1}$, we have
\begin{align}
B_{K}^n(4)&=\sum_{l=0}^{3}S_{K}^{n}(l)+S_{K}^{4}(\binom{4}{2}-4)+\sum_{l=4}^{3}\sum_{j=4-l}^{4-1}S_{K}^{l}(j)+\sum_{l=4}^{n-1}\sum_{j=0}^{3}S_{K}^{l}(j)\nonumber\\
&=\frac{(n+2)(n+1)(n^2+3n-12)}{24}.\label{eq27}
\end{align}
Moreover, when $r=5$ and $5\leq n$, by Example $3$ and Theorem $\ref{Thm1}$, we have
\begin{align}
B_{K}^n(5)&=\sum_{l=0}^{4}S_{K}^{n}(l)+S_{K}^{n}(5)\nonumber\\
&=\frac{(n+7)n(n^3+3n^2-6n-28)}{120}.\label{eq28}
\end{align}
\end{example}

When $r\geq 6$, we can compute $B_{K}^n(r)$ by using Proposition $\ref{Prp1}$ and Theorem $\ref{Thm1}$.

\section{The nonexistence of a perfect $t$-error-correcting code in $S_n$ under the Kendall $\tau$-metric for some $n$ and $t=2,3,4,~\text{or}~5$ }
\label{sec4}
In this section, we will prove the nonexistence of a perfect $t$-error-correcting code in $S_n$ under the Kendall $\tau$-metric for some $n$ and $t=2,3,4,~\text{or}~5$ by using the sphere-packing upper bound. By Proposition $1$, we give the necessary condition of the existence of a perfect $t$-error-correcting code in $S_n$ under the Kendall $\tau$-metric.
\begin{lemma}
\label{Lm8}
For any $0\leq t\leq \binom{n}{2}$, if there exists one perfect $t$-error-correcting code $C$ in $S_n$ under the Kendall $\tau$-metric. Then, we must have
\begin{equation}
B_{K}^n(t)\cdot |C|=n!.\label{eq29}
\end{equation}
That is, the necessary condition of the existence of a perfect $t$-error-correcting code in $S_n$ under the Kendall $\tau$-metric is $B_{K}^n(t)|n!$.
\end{lemma}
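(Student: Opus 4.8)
The plan is to unwind the definition of a perfect code and then simply count elements of $S_n$ in two ways. Recall from Section~\ref{sec2} that an $(n,2t+1)$-permutation code $C$ is perfect precisely when it attains the sphere-packing bound of Proposition~1 with equality; equivalently, the balls $B_K^n(\pi,t)$ for $\pi\in C$ form a partition of $S_n$. So the first step is to record why these balls are pairwise disjoint and why they cover $S_n$: disjointness follows from the minimum-distance hypothesis $d_K(\pi,\sigma)\ge 2t+1$ for distinct $\pi,\sigma\in C$ together with the triangle inequality for $d_K$ (if $\tau\in B_K^n(\pi,t)\cap B_K^n(\sigma,t)$ then $d_K(\pi,\sigma)\le 2t$, a contradiction), and the covering property is exactly the definition of perfectness (attaining the bound forces $\bigcup_{\pi\in C}B_K^n(\pi,t)=S_n$).

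The second step is the counting argument. Since the size of a Kendall $\tau$-ball of radius $t$ does not depend on its center, every ball $B_K^n(\pi,t)$ with $\pi\in C$ has cardinality $B_K^n(t)$. Because these $|C|$ balls are pairwise disjoint and their union is all of $S_n$, additivity of cardinality gives
\begin{equation}
\sum_{\pi\in C}\big|B_K^n(\pi,t)\big| \;=\; |C|\cdot B_K^n(t) \;=\; |S_n| \;=\; n!\nonumber,
\end{equation}
which is exactly $(\ref{eq29})$. Finally, since $C$ is nonempty, $|C|$ is a positive integer, so the displayed identity exhibits $n!$ as an integer multiple of $B_K^n(t)$; hence $B_K^n(t)\mid n!$, which is the stated necessary condition.

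I do not expect any genuine obstacle here: the lemma is an immediate reformulation of the definition of perfectness combined with the center-independence of ball sizes (both already established in the excerpt). The only point requiring a word of care is the disjointness of the balls, which uses the triangle inequality for the Kendall $\tau$-metric; this is standard, but it is worth stating explicitly so that the partition claim — and therefore the divisibility $B_K^n(t)\mid n!$ — is fully justified. This divisibility criterion is what will be tested against the explicit formulas for $B_K^n(t)$ from Section~\ref{sec3} in the subsequent nonexistence results.
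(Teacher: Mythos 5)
Your proposal is correct and follows essentially the same route as the paper: the paper's proof simply invokes the definition of a perfect code as attaining the sphere-packing bound of Proposition 1, which directly gives $B_K^n(t)\cdot|C|=n!$ and hence $B_K^n(t)\mid n!$. Your additional unpacking of the partition property (disjointness via the triangle inequality, covering from equality in the bound) is a more detailed justification of the same fact, not a different argument.
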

\begin{proof}
By the sphere-packing upper bound in Proposition $1$, if there exists one perfect $t$-error-correcting code $C$ in $S_n$ under the Kendall $\tau$-metric, we must have $B_{K}^n(t)\cdot |C|=n!$. Thus, $B_{K}^n(t)|n!$. So, the necessary condition of the existence of a perfect $t$-error-correcting code in $S_n$ under the Kendall $\tau$-metric is $B_{K}^n(t)|n!$.
\end{proof}

According to Lemma $\ref{Lm8}$, we have the following theorem which illustrate the nonexistence of a perfect $t$-error-correcting code in $S_n$ under the Kendall $\tau$-metric.
\begin{theorem}
\label{Thm2}
For any $0\leq t\leq \binom{n}{2}$, if $B_{K}^n(t)$ has a prime factor $p>n$, then there does not exist one perfect $t$-error-correcting code in $S_n$ under the Kendall $\tau$-metric.
\end{theorem}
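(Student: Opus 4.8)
The plan is to combine Lemma~\ref{Lm8} with a divisibility obstruction coming from the prime factorization of $n!$. By Lemma~\ref{Lm8}, if a perfect $t$-error-correcting code $C\subset S_n$ existed, we would have $B_K^n(t)\cdot|C|=n!$, so in particular $B_K^n(t)$ must divide $n!$. The whole argument reduces to showing that the hypothesis ``$B_K^n(t)$ has a prime factor $p>n$'' is incompatible with $B_K^n(t)\mid n!$.

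First I would recall the standard fact about the prime factorization of a factorial: for a prime $p$, the exponent of $p$ in $n!$ is $\sum_{k\geq 1}\lfloor n/p^k\rfloor$ (Legendre's formula), which is $0$ whenever $p>n$, since then $\lfloor n/p^k\rfloor=0$ for every $k\geq 1$. Equivalently, and more elementarily, $n!=1\cdot 2\cdots n$ is a product of integers each at most $n$, hence no prime exceeding $n$ can divide it. So if $p>n$ is prime, then $p\nmid n!$.

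Now I would run the contrapositive. Suppose, for contradiction, that a perfect $t$-error-correcting code in $S_n$ under the Kendall $\tau$-metric exists. Then Lemma~\ref{Lm8} gives $B_K^n(t)\mid n!$. By hypothesis $B_K^n(t)$ has a prime factor $p>n$; since $p\mid B_K^n(t)$ and $B_K^n(t)\mid n!$, transitivity of divisibility yields $p\mid n!$. But $p>n$ forces $p\nmid n!$ by the previous paragraph, a contradiction. Hence no such perfect code exists.

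This theorem is essentially a clean logical packaging of the sphere-packing necessary condition, so there is no real technical obstacle in the proof itself; the only thing to be careful about is stating the elementary number-theoretic fact precisely (that a prime larger than $n$ cannot divide $n!$) and invoking it correctly. The genuine work of the paper lies \emph{downstream}: computing $B_K^n(t)$ in closed form via Theorem~\ref{Thm1} (for $t=2,3,4,5$ these are the explicit polynomials in \eqref{eq25}--\eqref{eq28}) and then exhibiting, for the claimed ranges of $n$, an actual prime factor $p>n$ of that polynomial value — but that verification is carried out when the theorem is applied, not within its proof.
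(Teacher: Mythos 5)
Your proposal is correct and follows exactly the paper's argument: invoke Lemma~\ref{Lm8} to get $B_K^n(t)\mid n!$ and then observe that a prime factor $p>n$ of $B_K^n(t)$ cannot divide $n!$, giving a contradiction. The only difference is that you spell out the elementary fact $p>n\Rightarrow p\nmid n!$, which the paper leaves implicit.
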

\begin{proof}
By Lemma $\ref{Lm8}$, the necessary condition of the existence of a perfect $t$-error-correcting code in $S_n$ under the Kendall $\tau$-metric is $B_{K}^n(t)|n!$. Since $B_{K}^n(t)$ has a prime factor $p>n$, we have $B_{K}^n(t)\nmid n!$. So, we prove the above result.
\end{proof}

In the following, we will dicuss the nonexistence of a perfect $t$-error-correcting code in $S_n$ for some $n$ and $t=2,3,4,~\text{or}~5$ by using Theorem $\ref{Thm2}$.

When $t=2$, by $(\ref{eq25})$, we have $B_{K}^n(2)=\frac{(n+2)(n-1)}{2}$. By Theorem $\ref{Thm2}$, we can prove the nonexistence of a perfect two-error-correcting code in $S_n$, where $n+2>6$ is a prime.

When $t=3$, by $(\ref{eq26})$, we have $B_{K}^n(3)=\frac{(n+1)(n^2+2n-6)}{6}$. First, if $n+1>6$ is a prime, then $B_{K}^n(3)$ have a prime factor $n+1>n$.
Second, we compute $n^2+2n-6$ for $4\leq n\leq 33$ and obtain that $(n+1)(n^2+2n-6)$ has a prime factor $p>n$ except $n=13$ and $n=26$. If $n=13$, $B_{K}^{13}(3)=441=9\times 7^2$. Thus, $441\nmid 13!$. If $n=26$, $B_{K}^{26}(3)=3249=9\times 19^2$. Hence, $3249\nmid 26!$. So, by Theorem $\ref{Thm2}$, we can prove the nonexistence of a perfect three-error-correcting code in $S_n$, where $n+1>6$ is a prime, $n^2+2n-6$ has a prime factor $p>n$, or $4\leq n \leq 33$.

When $t=4$, by $(\ref{eq27})$, we have $B_{K}^n(4)=\frac{(n+1)(n+2)(n^2+3n-12)}{24}$.  First, if $n+1>6$ or $n+2>7$ is a prime, then $B_{K}^n(3)$ have a prime factor $p>n$. Second, we compute $n^2+3n-12$ for $5\leq n\leq 19$ and obtain that $(n ^2+3n-12)(n+1)(n+2)$ has a prime factor $p>n$ except $n=13$. If $n=13$, $B_{K}^{13}(4)=1715=5\times 7^3$. Thus, $1715\nmid 13!$. So, by Theorem $\ref{Thm2}$, we can prove the nonexistence of a perfect four-error-correcting code in $S_n$, where $n+1>6$ or $n+2>7$ is a prime, $n^2+3n-12$ has a prime factor $p>n$, or $5\leq n \leq 19$.

When $t=5$, by $(\ref{eq28})$, $B_{K}^n(5)=\frac{(n+7)n(n^3+3n^2-6n-28)}{120}$. By Theorem $\ref{Thm2}$, we can prove the nonexistence of a perfect five-error-correcting code in $S_n$, where $n+7\geq 12$ is a prime or $n^3+3n^2-6n-28$ has a prime factor $p>n$.

By the above discussion, we have the following theorem.
\begin{theorem}
\label{Thm3}
When $t=2$, there are no perfect two-error-correcting codes in $S_n$, where $n+2>6$ is a prime. When $t=3$, there are no perfect three-error-correcting codes in $S_n$, where $n+1>6$ is a prime, $n^2+2n-6$ has a prime factor $p>n$, or $4\leq n \leq 33$. When $t=4$, there are no perfect four-error-correcting codes in $S_n$, where $n+1>6$ or $n+2>7$ is a prime, $n^2+3n-12$ has a prime factor $p>n$, or $5\leq n \leq 19$. When $t=5$, there are no perfect five-error-correcting codes in $S_n$, where $n+7\geq 12$ is a prime or $n^3+3n^2-6n-28$ has a prime factor $p>n$.
\end{theorem}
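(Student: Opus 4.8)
The plan is to deduce all four assertions from Theorem \ref{Thm2}, using the closed-form expressions for $B_K^n(t)$ recorded in $(\ref{eq25})$--$(\ref{eq28})$. Recall the mechanism: by Lemma \ref{Lm8} a perfect $t$-error-correcting code in $S_n$ would force $B_K^n(t)\mid n!$, and Theorem \ref{Thm2} observes that if $B_K^n(t)$ has a prime factor $p>n$ then $B_K^n(t)\nmid n!$, a contradiction. So for each $t\in\{2,3,4,5\}$ the task reduces to a number-theoretic analysis of the polynomial giving $B_K^n(t)$: identify families of $n$ for which it provably has a prime factor exceeding $n$, and dispose of any residual small values of $n$ by a direct divisibility check.

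For $t=2$ we have $B_K^n(2)=\frac{(n+2)(n-1)}{2}$ from $(\ref{eq25})$; when $n+2>6$ is prime it divides the numerator, is coprime to $2$, hence survives into $B_K^n(2)$, and $n+2>n$, so Theorem \ref{Thm2} applies. For $t=3$, $B_K^n(3)=\frac{(n+1)(n^2+2n-6)}{6}$ from $(\ref{eq26})$: a prime $n+1>6$ again gives a factor $>n$, while the claim that every $n$ with $4\le n\le 33$ is excluded I would settle by computing $n^2+2n-6$ for each such $n$, observing that $(n+1)(n^2+2n-6)$ has a prime factor $p>n$ except at $n=13$ and $n=26$, and then handling those two: $B_K^{13}(3)=441=3^2\cdot 7^2\nmid 13!$ since $7^2\nmid 13!$, and $B_K^{26}(3)=3249=3^2\cdot 19^2\nmid 26!$ since $19^2\nmid 26!$. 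For $t=4$, $B_K^n(4)=\frac{(n+1)(n+2)(n^2+3n-12)}{24}$ from $(\ref{eq27})$, so a prime $n+1>6$ or $n+2>7$ produces a factor $>n$; checking $5\le n\le 19$ directly leaves only $n=13$, where $B_K^{13}(4)=1715=5\cdot 7^3\nmid 13!$ because $7^3\nmid 13!$. For $t=5$, $B_K^n(5)=\frac{(n+7)n(n^3+3n^2-6n-28)}{120}$ from $(\ref{eq28})$, and a prime $n+7\ge 12$ or any prime factor $p>n$ of $n^3+3n^2-6n-28$ invokes Theorem \ref{Thm2}. Collecting these four analyses yields the statement.

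The one point requiring care is that the candidate large prime $n+2$, $n+1$, or $n+7$ must not be cancelled by the denominators $2,6,24,120$; this is exactly why the hypotheses demand the prime be $>6$ (or $>7$, or $\ge 12$): a prime larger than every prime dividing the denominator is automatically coprime to it and passes intact into $B_K^n(t)$. The remaining obstacle is the finite-range verifications, which are routine but must be carried out; the sporadic exceptions ($n=13,26$ for $t=3$ and $n=13$ for $t=4$) are the delicate part, since there the obstruction is a prime \emph{power} rather than a prime, and one confirms $7^2\nmid 13!$, $19^2\nmid 26!$, and $7^3\nmid 13!$ by Legendre's formula for the $p$-adic valuation of a factorial (the exponent of $7$ in $13!$ is $\lfloor 13/7\rfloor=1$, and the exponent of $19$ in $26!$ is $\lfloor 26/19\rfloor=1$).
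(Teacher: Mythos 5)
Your proposal is correct and follows essentially the same route as the paper: invoke Theorem \ref{Thm2} with the closed forms $(\ref{eq25})$--$(\ref{eq28})$, note the candidate prime ($n+2$, $n+1$, or $n+7$) cannot be cancelled by the small denominator, verify the finite ranges by computation, and dispose of the sporadic cases $n=13,26$ (for $t=3$) and $n=13$ (for $t=4$) by checking $441\nmid 13!$, $3249\nmid 26!$, $1715\nmid 13!$ via Lemma \ref{Lm8}. Your explicit appeal to Legendre's formula for those prime-power checks is a minor elaboration of what the paper states without proof, not a different method.
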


\section{Conclusion}
Permutation codes under the Kendall $\tau$-metric have been attracted lots of research interest due to their applications in flash memories. In this paper, we considered the nonexistence of perfect codes under the Kendall $\tau$-metric. We gave the recursive formulas of the size of a ball or a sphere with radius $t$ in $S_n$ under the Kendall $\tau$-metric. Specifically, we gave the polynomial expressions of the size of a ball or a sphere with radius $r$ when $t=2,3,4,~\text{or}~5$. Finally, we used the sphere-packing upper bound to prove that there are no perfect $t$-error-correcting codes in $S_n$ under the Kendall $\tau$-metric  for some $n$ and $t=2,3,4,~\text{or}~5$. Specifically, we proved that there are no perfect two-error-correcting codes in $S_n$, where $n+2>6$ is a prime. We also proved that there are no perfect three-error-correcting codes in $S_n$, where $n+1>6$ is a prime, $n^2+2n-6$ has a prime factor $p>n$, or $4\leq n \leq 33$. We further proved that there are no perfect four-error-correcting codes in $S_n$, where $n+1>6$ or $n+2>7$ is a prime, $n^2+3n-12$ has a prime factor $p>n$, or $5\leq n \leq 19$. We proved that there are no perfect five-error-correcting codes in $S_n$, where $n+7\geq 12$ is a prime or $n^3+3n^2-6n-28$ has a prime factor $p>n$.











\begin{thebibliography}{99}
\baselineskip=18pt

\bibitem{Horvitz}
Horvitz M., Etzion T.: Constructions of snake-in-the-box codes for rank modulation. IEEE Trans. Inf. Theory {\bf 60}, 7016-7025 (2014).\par

\bibitem{Jiang1}
Jiang A., Mateescu R., Schwartz M., Bruck J.: Rank modulation for flash memories. IEEE Trans. Inf. Theory {\bf 55}, 2659-2673 (2009).\par

\bibitem{Jiang2}
Jiang A., Schwartz M., Bruck J.: Correcting charge-constrained errors in the rank-modulation scheme. IEEE Trans. Inf. Theory {\bf 56}, 2112-2120 (2010).\par

\bibitem{Klve}
Kl{\o}ve T., Lin T.T., Tsai S.C., Tzeng  W.G.: Permutation arrays under the Chebyshev distance. IEEE Trans. Inf. Theory {\bf 56}, 2611-2617 (2010).\par

\bibitem{Wang}
Wang X, Fu F.-W.: On the snake-in-the-box codes for rank modulation under Kendall's $\tau$-metric. Des. Codes Cryptogr. {\bf 83}, 455-465 (2017).\par

\bibitem{Wang2}
Wang X, Fu F.-W.: Snake-in-the-box codes under the $\ell_{\infty}$-metric for rank modulation. Des. Codes Cryptogr. {\bf 83}, 455-465 (2019).\par

\bibitem{Yehezkeally}
Yehezkeally Y., Schwartz M.: Snake-in-the-box codes for rank modulation. IEEE Trans. Inf. Theory {\bf 58}, 5471-5483 (2012).\par

\bibitem{Barg}
Barg A., Mazumdar A.: Codes in permutations and error correction for rank modulation. IEEE Trans. Inf. Theory {\bf 56}, 3158-3165 (2010).\par

\bibitem{Zhang}
Zhang Y.W., Ge G.N.: Snake-in-the-box codes for rank modulation under Kendall's $\tau$-metric. IEEE Trans. Inf. Theory {\bf 62}, 151-158 (2016).\par

\bibitem{Buzaglo}
Buzaglo S., Etzion T.: Bounds on the size of permutation codes with the Kendall $\tau$-metric. IEEE Trans. Inf. Theory {\bf 61}, 3241-3250 (2015).\par

\bibitem{Farnoud}
Farnoud F., Skachek V., Milenkovic O.: Error-correction in falsh memories via codes in the Ulam metric. IEEE Trans. Inf. Theory {\bf 59}, 3003-3020 (2013).\par

\bibitem{Zhou1}
Zhou H., Jiang A., Bruck J.: Systematic error-correcting codes for rank modulation. in Proc. IEEE Int. Symp. Inf. Theory, 2978-2982 (2012).\par

\bibitem{Zhou2}
Zhou H., Schwartz M., Jiang A., Bruck J.: Systematic error-correcting codes for rank modulation. IEEE Trans. Inf. Theory {\bf 61}, 17-32 (2015).\par

\bibitem{Wang3}
Wang X, Zhang Y.W., Yang Y.T., Ge G.N.: New bounds of permutation codes under Hamming metric and Kendall's $\tau$-metric. Des. Codes Cryptogr. {\bf 85}, 533-545 (2017).\par

\bibitem{Vijayakumaran}
Vijayakumaran S.: Largest permutation codes With the Kendall $\tau$-metric in $S_5$ and $S_6$. IEEE Communications Letters. {\bf 20}, 1912-1915 (2016).\par
\end{thebibliography}
\end{document}